\newtheorem{lemma}{Lemma}
\newtheorem{theorem}{Theorem}
\newtheorem{claim}{Claim}
\title{Note on Large Subsets of Binary Vectors with Similar Distances}
\author{Gregory Gutin and Mark Jones\\Royal Holloway, University of London\\Egham, Surrey, TW20 0EX, UK}
\begin{document}

\maketitle

\begin{abstract}
We consider vectors from $\{0,1\}^n$. The weight of such a vector $v$ is the sum of the coordinates of $v$.
The distance ratio of a set $L$ of vectors
is ${\rm dr}(L):=\max \{\rho(x,y):\ x,y \in L\}/ \min \{\rho(x,y):\ x,y \in L,\ x\neq y\},$ where $\rho(x,y)$
is the Hamming distance between $x$ and $y$. We prove that (a) for every constant $\lambda>1$ there are no positive constants $\alpha$ and $C$ such that every set $K$ of at least $\lambda^p$ vectors with weight $p$ contains a subset $K'$ with $|K'|\ge |K|^{\alpha}$ and ${\rm dr}(K')\le C$,
 (b) For a set $K$ of vectors with weight $p$, and a constant $C>2$, there exists $K'\subseteq K$ such that ${\rm dr}(K')\le C$ and $|K'| \ge |K|^\alpha$, where $\alpha = 1/ \lceil \log(p/2)/\log(C/2) \rceil$.
\end{abstract}

\section{Introduction}

We will consider $n$-dimensional binary vectors (i.e., vectors from $\{0,1\}^n$) and call them {\em $n$-vectors}.
The {\em (Hamming) weight} $|v|$ of an $n$-vector $v$ is the sum of the coordinates of $v$. The {\em (Hamming) distance} $\rho(u,v)$ between $n$-vectors $u,v$ is the number of coordinates where $u$ and $v$ differ. The {\em distance ratio} of a set $L$ of $n$-vectors
is $${\rm dr}(L):=\frac{\max \{\rho(x,y):\ x,y \in L\}} {\min \{\rho(x,y):\ x,y \in L,\ x\neq y\}}.$$

Let $p\le n$ be positive integers. Abramovich and Grinshtein \cite{AbrGri} asked whether the following claim holds true:

\begin{claim}\label{cl}
There exist positive constants $\alpha$, $C$ and $\lambda>1$ such that every  set $K$ of at least $\lambda^p$
$n$-vectors with Hamming weight $p$ contains a subset $K'$ with $|K'|\ge |K|^{\alpha}$ and ${\rm dr}(K')\le C$.
\end{claim}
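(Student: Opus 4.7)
The natural approach is to show that among $\lambda^p$ weight-$p$ vectors one can always isolate a large subfamily all of whose pairwise distances lie in a single ratio-$C$ window, automatically giving ${\rm dr}\le C$. Since pairs of weight-$p$ vectors have Hamming distance $\rho(u,v)=2(p-|u\cap v|)$, an even integer in $[0,2p]$, partitioning the possible distances into geometric intervals $[C^{k-1},C^{k})$ for $k=1,\dots,\lceil\log_{C}(2p)\rceil$ yields a colouring $c\colon\binom{K}{2}\to\{1,\dots,t\}$ with $t=O(\log p)$ colours. A monochromatic clique is automatically a valid $K'$, so the claim reduces to a Ramsey-type statement: in every such $t$-colouring on $\lambda^{p}$ vertices, find a monochromatic clique of size $|K|^{\alpha}$.

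\textbf{Steps.} Step~1: set up the bucket-colouring above and verify that a single-colour clique $K'$ satisfies ${\rm dr}(K')\le C$. Step~2: run an iterated ``fan'' argument. At each round, pick a vertex $v_{j}$ of the current set $K_{j}$, choose the most popular colour on its incident edges, and pass to the corresponding neighbourhood $K_{j+1}\subseteq K_{j}$, of size at least $(|K_{j}|-1)/t$. After $r$ rounds $|K_{r}|\ge|K|/t^{r}$, and a second pigeonhole on the $t^{r}$ possible colour-patterns among the $r$ centres produces a monochromatic clique of size roughly $\log_{t}|K|$. Step~3: upgrade this logarithmic clique size to $|K|^{\alpha}$ for a constant $\alpha$. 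This should require keeping a constant fraction of the set at each step rather than a $1/t$ fraction; candidate tools are (i) a sunflower/$\Delta$-system iteration that collapses several intersection levels into one core, (ii) a Frankl--Wilson type polynomial bound on the number of distinct intersection sizes compatible with $\lambda^{p}$ sets, or (iii) a recursive dimension-reduction that projects onto few coordinates whose restricted distances already determine the original bucket.

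\textbf{Main obstacle.} The crux is Step~3. Because the number of distance scales is $t=\Theta(\log p)$ while the hypothesis only guarantees $|K|\ge\lambda^{p}$, the plain fan/Ramsey argument of Step~2 delivers $\alpha=\Theta(1/\log p)$, which is not constant in $p$. Any successful proof must therefore exploit the specific metric structure of $\{0,1\}^{n}$ beyond Ramsey's theorem---most likely the fact that the bucket of a pair depends only on intersection size, which opens the door to extremal-set-theory and polynomial methods. I expect the proof to succeed or fail according to whether ``nested shell'' families can be ruled out: namely $\lambda^{p}$ weight-$p$ vectors stratified into $\Theta(\log p)$ geometrically-spaced distance shells around some centre, each shell roughly as large as the union of all smaller shells, so that every step of the fan argument loses a full $1/t$ factor. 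Ruling out such stratifications, or else forcing enough additional structure on them to recover a polynomial-sized monochromatic clique, is where I expect the decisive difficulty to lie.
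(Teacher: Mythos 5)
Your own ``main obstacle'' paragraph contains the answer: the statement is \emph{false}, and the ``nested shell'' families you hoped to rule out in Step~3 are precisely the paper's counterexample (Section~2, Theorem~1). Given any constants $C\ge 1$, $0<\alpha\le 1$ and $\lambda>1$, one fixes integers $t$ and $a$ with $1/t<\alpha$ and $a>C$, takes $p$ a multiple of $a^t$, $q$ with $q^t\ge\lambda^p$, and $n$ large enough, and builds a set $K$ of $q^t\ge\lambda^p$ weight-$p$ vectors organized as a $q$-ary tree of depth $t$ (the $\mathcal{C}_i$-sets): at level $i$ the $q$ branches share a common core of $p-p/a^{t-i}$ coordinates equal to $1$, their remaining $p/a^{t-i}$ ones lie in pairwise disjoint blocks, so vectors in different branches are at distance exactly $2p/a^{t-i}$, and each branch recursively carries the same structure one level down. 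Any $K'\subseteq K$ with $|K'|>q=|K|^{1/t}$ either lies inside a single branch (and one descends a level) or, by pigeonhole, contains two vectors in one branch (distance at most $2p/a^{t-i+1}$) together with two vectors in different branches (distance $2p/a^{t-i}$), forcing ${\rm dr}(K')\ge a>C$. Since $|K|^\alpha=q^{t\alpha}>q$, no admissible $K'$ exists. So Step~3 cannot be completed by any sunflower, Frankl--Wilson, or projection argument: the obstruction is realized, not removable.

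Your Steps~1--2, by contrast, are in the right spirit and essentially match the paper's salvage (Theorem~2), though the paper exploits the metric structure rather than generic Ramsey. Bucketing distances into $t=\lceil\log(p/2)/\log(C/2)\rceil$ geometric scales and, at scale $i$, greedily extracting a maximal $C^i/2^{i-1}$-separated subset $K_{i+1}\subseteq K_i$, one finds either a ball $N_i(z)$ of diameter at most $C^i/2^{i-2}$ and size at least $|K|^{1/t}$ (which already has ${\rm dr}\le C$ because $K_i$ is $C^{i-1}/2^{i-2}$-separated), or $|K_{i+1}|\ge|K_i|/|K|^{1/t}$; after $t$ rounds $K_t$ itself satisfies ${\rm dr}(K_t)\le C$ and $|K_t|\ge|K|^{1/t}$. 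This yields $\alpha=1/\lceil\log(p/2)/\log(C/2)\rceil=\Theta(1/\log p)$ --- a genuinely polynomial-sized subset, much larger than the $\log_t|K|$-sized clique your fan iteration produces --- and the counterexample shows this dependence of $\alpha$ on $p$ cannot be eliminated, so this is the best one can hope for up to the constant in the exponent.
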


If the claim is true, it can be used in statistics for establishing the lower bounds for the minimax risk of estimation in various sparse settings  \cite{AbrGri,RigTsy11}. If the claim is not true, any counterexample can be used to impose some conditions on $K$ such that the claim becomes true and, thus, useful for establishing the lower bounds.

The following example shows that for some sets $K$ the claim is true.
Let $p<n/2$ and let $\Omega$ denote the set of all $n$-vectors of weight $p$. By Lemma A.3 in \cite{RigTsy11} (which is a generalization of the Varshamov-Gilbert lemma attributed to Reynaud-Bouret \cite{rey03}), there exists a subset $\Omega'$ of $\Omega$ such that $\rho(x,y)\ge (p+1)/4$ for all distinct $x,y\in \Omega'$ and $|\Omega'|\ge (1+en/p)^{\beta p}$ for some $\beta\ge 9\cdot 10^{-4}$. It follows that ${\rm dr}(\Omega')<8$ (since $\rho(u,v)\le 2p$ for all $u,v\in \Omega$). Moreover, since $|\Omega|={n \choose p}<(en/p)^p$ and $|\Omega'|> (en/p)^{\beta p}$, we have $|\Omega'|>|\Omega|^{\beta}.$

Unfortunately, in general, the claim is not true and we give a counterexample to the claim in Section \ref{sec:ex}. In Section \ref{sec:pos}, we show that a weaker claim holds: there exists $K'\subseteq K$ such that ${\rm dr}(K)\le C$ and $|K'| \ge |K|^\alpha$,
where $\alpha = 1/ \lceil \log(p/2)/\log(C/2) \rceil$ ($C>2$). We conclude the paper with an open problem stated in Section \ref{sec:dis}.

\medskip

Henceforth $[s]:=\{1,\ldots , s\}$ for a positive integer $s$.

\section{Counterexample}\label{sec:ex}

Let us fix constants $C\ge 1$, $0<\alpha\le 1$ and $\lambda>1$. We will show that
there is no set $K$ of at least $\lambda^p$ $n$-vectors satisfying Claim \ref{cl} for these $C$ and $\alpha$. In this section, we will use
fixed positive integers $t,a,p,q$ and $n$ satisfying the following:
\begin{enumerate}
 \item $1/t< \alpha$,  $a > C$;
 \item $p$ is a multiple of $a^t$;
 \item $q^t \ge \lambda^p$;
 \item $n \ge p+p(q-1)\sum_{j=1}^{j=t}(q/a)^{t-j}$.
\end{enumerate}

We say a set $L$ of $n$-vectors
is a {\em ${\cal C}_0$-set}
if $L$ consists of a single vector.
For $i\in [t]$, a set $L$ of vectors
is a {\em ${\cal C}_i$-set}
if it satisfies the following:
\begin{enumerate}
\item $|L|=q^i;$
 \item $\max \{\rho(x,y):\ x,y \in L\} = 2p/a^{t-i};$
 \item $L$ can be partitioned into $q$ sets $L_1, \ldots , L_q$ such that for each $r$, $L_r$
is a ${\cal C}_{i-1}$-set,
and for all $x \in L_r, y \in L_s$ with $r \neq s$, $\rho(x,y)= 2p/a^{t-i}$.
\end{enumerate}

\begin{lemma}\label{lem1}
For each $i\in [t]$, there is a set $K$ of $n$-vectors
with Hamming weight $p$
such that $K$ is a ${\cal C}_i$-set.
\end{lemma}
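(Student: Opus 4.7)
The plan is to exhibit an explicit construction of a ${\cal C}_i$-set and verify its defining properties, invoking an induction on $i$ only for the partition condition. I would index the $q^i$ desired vectors by tuples $(r_1,\ldots,r_i)\in[q]^i$ and choose pairwise disjoint coordinate blocks $B_\emptyset$, $\{B_{r_1}\}$, $\{B_{r_1,r_2}\}$, \ldots, $\{B_{r_1,\ldots,r_i}\}$ inside $[n]$, with $|B_\emptyset|=p-p/a^{t-i}$, $|B_{r_1,\ldots,r_k}|=p/a^{t-i+k-1}-p/a^{t-i+k}$ for $1\le k\le i-1$, and $|B_{r_1,\ldots,r_i}|=p/a^{t-1}$. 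By condition~2 all the relevant powers of $a$ divide $p$, so these sizes are non-negative integers. The vector indexed by $(r_1,\ldots,r_i)$ takes its $1$-support to be $B_\emptyset\cup B_{r_1}\cup B_{r_1,r_2}\cup\cdots\cup B_{r_1,\ldots,r_i}$.

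With this construction I would verify the three defining properties of a ${\cal C}_i$-set. The weight of each vector telescopes to $p$, and the count is $q^i$ by design. For two distinct tuples whose earliest disagreement is at coordinate $k+1$, the supports intersect exactly in $B_\emptyset\cup B_{r_1}\cup\cdots\cup B_{r_1,\ldots,r_k}$, of size $p-p/a^{t-i+k}$, so the Hamming distance is $2p/a^{t-i+k}$; this is maximised at $k=0$ and equals the required $2p/a^{t-i}$. The partition $L_r:=\{v_{(r_1,\ldots,r_i)} : r_1=r\}$ is a ${\cal C}_{i-1}$-set because fixing $r_1=r$ reproduces the same block construction one level down (with $B_\emptyset\cup B_r$ in the role of the new global block), which I would formalise as an induction on $i$ with base case $i=1$ (the partition into $q$ singletons, i.e.\ ${\cal C}_0$-sets, being trivial); the cross-$L_r/L_s$ distance equals $2p/a^{t-i}$ by the $k=0$ calculation.

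The one genuinely computational step, and hence the main obstacle, is checking that the construction fits inside $\{0,1\}^n$. Summing the block sizes yields $|B_\emptyset|+\sum_{k=1}^{i-1}q^k|B_{r_1,\ldots,r_k}|+q^i|B_{r_1,\ldots,r_i}|$, which after telescoping simplifies to $p+p(q-1)\sum_{j=1}^{i}q^{i-j}/a^{t-j}$. For $i=t$ this coincides exactly with the right-hand side of condition~4; for $i<t$ a straightforward comparison (using $a^{t-i}\ge 1$) shows it is strictly smaller. In either case the total is at most $n$, so we can realise the construction inside $\{0,1\}^n$ by leaving the unused coordinates set to $0$, completing the proof.
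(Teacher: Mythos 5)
Your proof is correct and is essentially the paper's construction: your tuple-indexed disjoint blocks $B_{r_1,\ldots,r_k}$ are exactly what the paper's recursion on pairs $(S,T)$ produces when unrolled, with the same block sizes and the same space bound $p+p(q-1)\sum_{j=1}^{i}q^{i-j}/a^{t-j}$ matching condition~4. The only difference is presentational (explicit indexing versus recursion), so no further comparison is needed.
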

\begin{proof}
For a set $L$ of $n$-vectors
with Hamming weight $p$
to be a ${\cal C}_i$-set,
we need that  $$\max \{\rho(x,y):\ x,y \in L\} = 2p/a^{t-i}.$$
So for every pair $x,y \in L$ of distinct $n$-vectors, there must be a set $X \subseteq[n]$ with $|X|\ge p-p/a^{t-i}$,
such that $x_r=y_r=1$ for all $r \in X$.
In fact, in our construction below we will assure that in a
${\cal C}_i$-set,
there exists $X \subseteq[n]$ with $|X|\ge p-p/a^{t-i}$ such that $x_r=1$ for all $x \in L$ and $r \in X$.

For some $S \subseteq T \subseteq [n]$, we say a set $L$ of $n$-vectors is a ${\cal C}_i$-set {\em between $(S,T)$} if $L$ is
a ${\cal C}_i$-set,
and for all $x \in L$, $x_r=1$ if $r\in S$ and $x_r=0$ if $r \notin T$.
We give a recursive method to construct a ${\cal C}_i$-set between $(S,T)$ when $|S|=p-p/a^{t-i}$
and $|T|$ is large enough (we calculate the required size of $T$ later).
We can then construct the required set $K$ by constructing a ${\cal C}_t$-set between $(\emptyset, [n])$.

Given $S, T$, construct a ${\cal C}_i$-set $L$ between $(S,T)$ as follows.
If $i=0$, return a single $n$-vector $x$ of Hamming weight $p$, such that $x_r=1$ for all $r \in S$ and $x_r=0$ for all $r \notin T$.

If $i\ge 1$,
partition $T\backslash S$ into $q$ sets $T_1, \dots , T_q$, such that
$-1\le |T_r|-|T_s|\le 1$ for all $r,s$.
For each $1\le r \le q$, let $S_r$ be a subset of $T_r$ of size $p/a^{t-i} - p/a^{t-(i-1)}$. Then for each $r$ construct a ${\cal C}_{i-1}$-set $L_r$ between $(S\cup S_r, S\cup T_r)$, and let $L$ be the union of these sets.
(Note that $|S\cup S_r|=p-p/a^{t-(i-1)}$, as required for the recursion.)

Observe that since $|S|=p-p/a^{t-i}$, $\max \{\rho(x,y):\ x,y \in L\} \le 2p/a^{t-i}$.
Furthermore, since $T_1, \dots, T_q$ are disjoint,
for $x \in L_r, y \in L_s$ with $r \neq s$, $\rho(x,y)= 2p/a^{t-i}$.
Finally note that $|L|=\sum_{r=1}^{r=q}|L_r| = qq^{i-1}=q^i$. Therefore $L$ satisfies all the conditions of a ${\cal C}_i$-set between $(S,T)$.

We now calculate a bound $f_i$ such that we can construct a ${\cal C}_i$-set between $(S,T)$ when $|S|=p-p/a^{t-i}$ as long as $|T| \ge f_i$.

Clearly $f_0=p$.
For $i> 0$, in the construction above we require that $|S\cup T_r|\ge f_{i-1}$ for each $1 \le r \le q$. Therefore we require $$f_i = |S| + q(f_{i-1}-|S|) = qf_{i-1} - (q-1)(p-p/a^{t-i}).$$
Observe that this is satisfied by setting $f_i=p+p(q-1)\sum_{j=1}^{j=i}(q^{i-j}/a^{t-j})$.

So to construct a ${\cal C}_i$-set between $(\emptyset, [n])$, it suffices that $n \ge p+p(q-1)\sum_{j=1}^{j=i}(q^{i-j}/a^{t-j})$, which holds by Part 4 of the conditions on $t,a,p,q$ and $n$ given in the beginning of this section.
\end{proof}


\begin{theorem}\label{th1}
There is a set $K$ of $n$-vectors for which Claim \ref{cl} does not hold.
\end{theorem}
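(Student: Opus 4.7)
The plan is: given arbitrary $C \ge 1$, $0 < \alpha \le 1$, and $\lambda > 1$ (the constants fixed at the start of this section), produce a single set $K$ of weight-$p$ $n$-vectors with $|K|\ge \lambda^p$ admitting no subset $K'$ satisfying $|K'|\ge |K|^\alpha$ and $\mathrm{dr}(K')\le C$. Conditions 1--4 on $t,a,p,q,n$ are easy to satisfy: pick $t$ with $1/t<\alpha$, then $a>C$, then $p$ a sufficiently large multiple of $a^t$, then an integer $q$ with $q^t\ge \lambda^p$, and finally $n$ at least the bound in condition 4. Take $K$ to be the $\mathcal{C}_t$-set of weight-$p$ $n$-vectors supplied by Lemma~\ref{lem1}; then $|K|=q^t\ge \lambda^p$.

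The main ingredient is a structural claim proved by induction on $i\in\{1,\dots,t\}$: if $L$ is a $\mathcal{C}_i$-set and $K' \subseteq L$ has $|K'|\ge 2$ and $\mathrm{dr}(K')\le C$, then $|K'|\le q$. The base case $i=1$ is immediate since $|L|=q$. For the inductive step, write $L=L_1\cup\cdots\cup L_q$ where each $L_r$ is a $\mathcal{C}_{i-1}$-set. Either $K'$ lies inside a single $L_r$, so the induction hypothesis applies, or $K'$ meets two distinct parts, in which case $K'$ contains a pair at distance exactly $2p/a^{t-i}$, the diameter of $L$. If two vectors of $K'$ were also contained in a common $L_r$, their distance would be at most $2p/a^{t-i+1}$ (the diameter of a $\mathcal{C}_{i-1}$-set), forcing $\mathrm{dr}(K')\ge a>C$ --- a contradiction. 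Hence at most one vector of $K'$ lies in each $L_r$, and $|K'|\le q$.

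Now apply this structural claim with $i=t$ and $L=K$: any $K'\subseteq K$ with $\mathrm{dr}(K')\le C$ satisfies $|K'|\le q$. On the other hand $|K|^\alpha = q^{t\alpha}>q$ since $t\alpha>1$ by condition 1. So no subset $K'$ of $K$ can simultaneously satisfy $|K'|\ge|K|^\alpha$ and $\mathrm{dr}(K')\le C$, which is exactly the failure of Claim~\ref{cl} for this $K$.

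I do not expect a serious obstacle, since the four numerical conditions on $t,a,p,q,n$ were designed to make this induction run cleanly. The key quantitative point is the gap $a>C$ between the diameters at consecutive levels of the recursive construction: moving from a $\mathcal{C}_{i-1}$-part to the enclosing $\mathcal{C}_i$-set multiplies the diameter by exactly $a$, and this is what forces any low-distance-ratio subset to pick at most one vector per part at each level of the hierarchy, ultimately capping $|K'|$ at $q$.
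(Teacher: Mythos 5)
Your proposal is correct and follows essentially the same route as the paper: both arguments induct on the level $i$ of the recursive $\mathcal{C}_i$-structure, use the pigeonhole/one-vector-per-part observation together with the ratio $a>C$ between diameters at consecutive levels, and conclude that any subset of size exceeding $q=|K|^{1/t}<|K|^{\alpha}$ has distance ratio at least $a>C$. Your version is merely the contrapositive formulation (bounding $|K'|$ given $\mathrm{dr}(K')\le C$ rather than bounding $\mathrm{dr}(K')$ given $|K'|>q$), with a somewhat cleaner statement of the inductive hypothesis.
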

\begin{proof}
We will construct a set $K$ such that for any subset of $K$ with more than $q=|K|^{1/t}$ vectors, the distance ratio  is at least $a$. This implies that for any subset with at least $|K|^{\alpha}$ vectors the distance ratio is greater than $C$, as required.

By Lemma \ref{lem1}, we may assume that we have a
${\cal C}_i$-set $K$.
Thus, $K$ can be partitioned into $q$ sets $K_1, \ldots, K_q$ such that for each $r$,
$K_r$ is a ${\cal C}_{i-1}$-set,
and for all $x \in K_r, y \in K_s$ with $r \neq s$, $\rho(x,y)= 2p/a^{t-i}$.

Note that any subset $K'\subseteq K$ of more than $q$ vectors will contain at least two vectors from $K_r$ for some $r$ and so $\min \{\rho(x,y):\ x,y \in K', x\neq y\} \le 2p/a^{t-i+1}$; furthermore if $K'$ contains vectors from $K_r$ and $K_s$ for $r\neq s$ then $\max \{\rho(x,y):\ x,y \in K'\} \ge 2p/a^{t-i}$.

Therefore, for any $K' \subseteq K$ with $|K'|>q$, either ${\rm dr}(K')\ge a$, or $K'\subseteq K_i$ for some
${\cal C}_{i-1}$-set $K_i$.
Furthermore there is no $K'\subseteq K$ with $|K'|>q$ if
$K$ is a ${\cal C}_1$-set.
So by induction on $i\ge 1$, every $K'\subseteq K$ with $|K'|>q$ has  ${\rm dr}(K')\ge a$.
By letting $i=t$, we complete the proof of the theorem.
\end{proof}

\section{Positive Result}\label{sec:pos}

Given a set $K$ of $n$-vectors, we are interested in finding a subset $K'\subseteq K$ as large as possible such that ${\rm dr}(K')\le C$, for some constant $C$.
The following is such a result.

\begin{theorem}\label{th2}
Let $K$ be a set of $n$-vectors with Hamming weight exactly $p$, and let $C> 2$ be a constant. Then there exists $K'\subseteq K$ such that ${\rm dr}(K')\le C$ and $|K'| \ge |K|^\alpha$, where $\alpha = 1/ \lceil \log(p/2)/\log(C/2) \rceil$.
\end{theorem}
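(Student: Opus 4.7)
The plan is to prove the statement by induction on the positive integer $k := \lceil \log(p/2)/\log(C/2)\rceil$. For the base case $k=1$, which is equivalent to $p\le C$, every pair of distinct weight-$p$ vectors has Hamming distance between $2$ and $2p$, so $\mathrm{dr}(K)\le p\le C$ and we may take $K'=K$, giving $|K'|=|K|=|K|^{1/1}$ as required.

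For the inductive step $k\ge 2$, the key threshold is $d^* := 2(C/2)^{k-1}$. It satisfies two properties that I would exploit. First, any set of weight-$d$ vectors with $d\le d^*$ has its own recursion parameter $\lceil \log(d/2)/\log(C/2)\rceil \le k-1$, so the inductive hypothesis applies to it. Second, any subset of $K$ whose pairwise distances all strictly exceed $d^*$ automatically has $\mathrm{dr}\le 2p/d^*\le 4(C/2)^k/(2(C/2)^{k-1})=C$, where $p\le 2(C/2)^k$ follows from the definition of $k$.

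I would then fix an arbitrary pivot $v_0\in K$ and split $K\setminus\{v_0\}$ into the close set $L=\{v:\rho(v_0,v)\le d^*\}$ and its complement $F$, branching on which is large. In the $L$-dominant branch, a pigeonhole over the at most $(C/2)^{k-1}$ even distance values in $[2,d^*]$ yields a specific $d_0$ for which $K_0:=\{v\in L:\rho(v_0,v)=d_0\}$ is large; the translation $v\mapsto v\oplus v_0$ sends $K_0$ to a set of weight-$d_0$ vectors preserving all pairwise distances, and by the first property above the inductive hypothesis produces the required $K'$. In the $F$-dominant branch, I would greedily select pivots so that each new one lies outside the radius-$d^*$ closed ball around every previously-chosen pivot; by the second property, this pivot family has $\mathrm{dr}\le C$, and uniform control on $L$-sizes across pivots caps the ball sizes and ensures the greedy process runs long enough.

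The hard part is calibrating the dichotomy so that both branches deliver at least $|K|^{1/k}$ vectors. The pigeonhole of the first branch costs a factor of $(C/2)^{k-1}$ in the count of surviving vectors (one for each available distance value), while the greedy of the second branch loses a comparable factor in the ball-covering bound. Balancing these without constant-factor slack is delicate; I would expect the argument to place the Branch A/B cut-off at roughly $|L|\approx (C/2)^{k-1}|K|^{(k-1)/k}$, and to absorb the constants either by averaging over pivot choices (so that the worst-case pivot is avoided), or by slightly strengthening the inductive statement to account for the constant-factor leakage at each level of the recursion.
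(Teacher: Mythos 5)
Your outline contains a genuine gap, and it is exactly the one you flag yourself at the end: the two branches of your dichotomy do not meet. Write $N=|K|$ and let $B$ be the size of the largest radius-$d^*$ ball around a point of $K$. Branch B (greedy selection of $d^*$-separated pivots) yields at least $N/B$ pivots, so it delivers $N^{1/k}$ only when $B\le N^{(k-1)/k}$. Branch A first pays a pigeonhole factor of $(C/2)^{k-1}$ (the number of even distance classes in $[2,d^*]$) before the inductive hypothesis is applied, so it delivers roughly $\bigl(B/(C/2)^{k-1}\bigr)^{1/(k-1)}$, which reaches $N^{1/k}$ only when $B\ge (C/2)^{k-1}N^{(k-1)/k}$. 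For $B$ strictly between these two thresholds neither branch produces $N^{1/k}$ vectors; already for $k=2$, $C=4$, $p=8$ both branches bottom out at about $N^{1/2}/2$. Your two proposed repairs are not worked out, and the second one (weakening the inductive conclusion to absorb the leakage) provably cannot recover the theorem as stated: carrying a multiplicative defect $h(k)$ through the recursion forces $h(k)\gtrsim (C/2)^{k-1}$, giving only $|K'|\ge |K|^{1/k}/(C/2)^{k-1}$, whereas the theorem has no constant slack. Averaging over pivots does not obviously help either, since the loss comes from splitting a ball into distance classes, not from an unlucky choice of pivot.

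For comparison, the paper avoids this loss entirely by never recursing into a ball. It builds a chain $K=K_1\supseteq K_2\supseteq\cdots\supseteq K_t$ where $K_{i+1}$ is a \emph{maximal} subset of $K_i$ with pairwise distances at least $2(C/2)^i$. The point is that a ball of radius $2(C/2)^i$ inside $K_i$ already has ${\rm dr}\le C$ \emph{for free}, because the previous step guarantees that the minimum distance within $K_i$ is at least $2(C/2)^{i-1}$ while the ball's diameter is at most $4(C/2)^i$; there is no need to renormalise to a set of smaller weight, hence no pigeonhole over exact distances and no constant-factor leakage. At each level one either finds a ball of size $N^{1/t}$ (done) or loses at most a factor $N^{1/t}$ in passing to the net, and after $t-1$ levels the surviving net has size at least $N^{1/t}$ and minimum distance at least $2(C/2)^{t-1}\ge 2p/C$, so it has ${\rm dr}\le C$ itself. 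If you want to keep your translation-and-induction framework, the fix is essentially to import this idea: recurse on sets with a guaranteed minimum distance (or bounded diameter) rather than on sets of smaller fixed weight.
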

\begin{proof}
Let $t = \lceil \log(p/2)/\log(C/2) \rceil = 1/\alpha$.

Let $K_1 = K$. For each $1 \le i < t$, let $K_{i+1}$ be a maximal subset of $K_i$ such that $\min \{\rho(x,y) | x,y \in K_{i+1},\ x\neq y\} \ge C^i/2^{i-1}$.
For each vector $z \in K_{i}$, let $N_i(z)$ be the set of vectors $x \in K_i$ for which $\rho(x,z) \le C^i/2^{i-1}.$

Observe that $\max \{\rho(x,y) | x,y \in N_i(z)\} \le C^i/2^{i-2}$. Since
$\min \{\rho(x,y) | x,y \in K_i\} \ge C^{i-1}/2^{i-2}$, it follows that ${\rm dr}(N_i(z))\le C$.
Note furthermore that by the maximality of $K_{i+1}$, every vector in $K_i$ is in $N_i(x)$ for some $x \in K_{i+1}$.
Therefore, for $1 \le i < t$, we either have that $|N_i(x)| \ge |K|^\alpha$ for some $x \in K_{i+1}$, in which case we are done, or $|K|^\alpha|K_{i+1}|\ge |K_i|$.
By induction, we have that $|K_i|\ge |K|/|K|^{\alpha(i-1)}$ for $1 \le i \le t$ (or else we can find a set $N_i(x)$
satisfying the theorem).
In particular, we have that $|K_t|\ge |K|/|K|^{\alpha(t-1)} = |K|/|K|^{1-\alpha} = |K|^{\alpha}$.

Now observe that $\max \{\rho(x,y) | x,y \in K_t\} \le 2p$. Furthermore, $$\min \{\rho(x,y) | x,y \in K_t,\ x\neq y\} \ge C^{t-1}/2^{t-2} = (4/C)(C/2)^t \ge (4/C)(p/2) = 2p/C. $$ Therefore ${\rm dr}(K_t) \le C$.
This completes the proof.
\end{proof}

\section{Discussion}\label{sec:dis}

Let $p$ and $n$  be integers ($n>p>0$), and let $C\ge 1$ be a real.
We can view our results as a study of a maximal positive-valued function $\alpha(C,p,n)$ defined as follows.
For $\lambda$ large enough, every set $K$ of at least $\lambda^p$ $n$-vectors with Hamming weight exactly $p$, there exists $K'\subseteq K$ such that ${\rm dr}(K')\le C$ and $|K'| \ge |K|^{\alpha(C,p,n)}$.
In Theorem \ref{th1}, we prove that for every positive real $\alpha_0$ and $C\ge 1$ there exists $p$ such that for $n$ large enough
$\alpha(C,p,n)<\alpha_0$. In Theorem \ref{th2}, we show that $\alpha(C,p,n) \ge 1/\lceil \log(p/2)/\log(C/2) \rceil$ provided $C>2$. It would be interesting to improve this bound on $\alpha(C,p,n)$.


\begin{thebibliography}{10}

\bibitem{AbrGri} Felix Abramovich and Vadim Grinshtein, Private communication, Jan. 2012.

\bibitem{rey03} Patricia Reynaud-Bouret, Adaptive estimation of the intensity of inhomogeneous Poisson
processes via concentration inequalities. Probab. Theory Related Fields 126: 103--153, 2003.

\bibitem{RigTsy11} Philippe   Rigollet and Alexandre Tsybakov, Exponential Screening and optimal rates of sparse estimation.
Ann. Statist. 39(2): 731--771, 2011.
(See also http://arxiv.org/abs/ArXiv:1003.2654v3.)

\end{thebibliography}
\end{document}